\definecolor{refblue}{RGB}{102, 102, 153}
\definecolor{stringgreen}{RGB}{80, 107, 65}
\definecolor{keywordblue}{RGB}{64, 89, 245}
\lstdefinestyle{lststyle}{
  commentstyle=\color{green},
  keywordstyle=\color{keywordblue},
  numberstyle=\tiny\color{gray},
  stringstyle=\color{stringgreen},
  basicstyle=\ttfamily\footnotesize,
  breakatwhitespace=false,
  breaklines=true,
  numbers=none,
  captionpos=b,
  frame=lines,
  keepspaces=true,
  numbers=left,
  numbersep=5pt,
  showspaces=false,
  showstringspaces=false,
  showtabs=false,
  tabsize=2
}
\begin{document}
\title{Uncomputation in the Qrisp high-level Quantum Programming Framework}
\titlerunning{Uncomputation in Qrisp}
%
\author{Raphael Seidel \inst{1}\and
Nikolay Tcholtchev \inst{1}\and
Sebastian Bock \inst{1} \and
Manfred Hauswirth \inst{1,} \inst{2}}
\authorrunning{R. Seidel et al.}
%
\institute{Fraunhofer Institute for Open Communications Systems, Kaiserin-Augusta-Allee 31, 10589 Berlin, Germany \\ \email{firstname.lastname@fokus.fraunhofer.de}\and
TU Berlin, Straße des 17. Juni 135, 10623 Berlin, Germany
\email{firstname.lastname@tu-berlin.de}}
\maketitle              
\begin{abstract}
Uncomputation is an essential part of reversible computing and plays a vital role in quantum computing. Using this technique, memory resources can be safely deallocated without performing a non-reversible deletion process. For the case of quantum computing, several algorithms depend on this as they require disentangled states in the course of their execution. Thus, uncomputation is not only about resource management, but is also required from an algorithmic point of view. However, synthesizing uncomputation circuits is tedious and can be automated. In this paper, we describe the interface for automated generation of uncomputation circuits in our Qrisp framework. Our algorithm for synthesizing uncomputation circuits in Qrisp is based on an improved version of ``Unqomp'',  a solution presented by Paradis et. al. Our paper also presents some improvements to the original algorithm, in order to make it suitable for the needs of a high-level programming framework. Qrisp itself is a fully compilable, high-level programming language/framework for gate-based quantum computers, which abstracts from many of the underlying hardware details. Qrisp's goal is to support a high-level programming paradigm as known from classical software development. 

\keywords{Quantum computation \and Uncomputation \and High-level programming \and Qrisp.}
\end{abstract}
\section{Introduction}
While the hardware side of quantum computing has seen steady improvements, significant progress in quantum software development methods is still lacking. This is due to the fact that coding algorithms for the main available physical backends is still done using quantum circuit objects, which are indeed expressive but provide little structure. In order to better support more complex algorithms, which might include a multitude of concepts, a more abstract programming workflow is necessary.

This problem has been identified by the community and two solutions have been proposed: Q\#~\cite{q_sharp} and Silq~\cite{silq}. Unfortunately, these proposals currently provide no straightforward way of compiling their algorithms into quantum circuits. In previous work on Qrisp~\cite{Qrisp}, we demonstrated several constructs and abstractions, which permit a high-level programming workflow, while still maintaining full platform-independent compilability.
The fundamental paradigm behind Qrisp's design has always been the automation of as many of the repetitive steps of low-level programming as possible without losing expressiveness. As uncomputation is a central and re-occurring topic in many quantum algorithms, it is natural to investigate the automation of this procedure and how Qrisp can support it. In the following, we present an interface for automatic uncomputation within Qrisp as well as some adjustments to the underlying algorithm ``Unqomp''~\cite{unqomp}.

The rest of this paper is organized as follows: Section~\ref{Brief_Overview_of_Qrisp} overviews our Qrisp framework for high-level programming of quantum computers. Then Section~\ref{The_Need_for_Uncomputation_in_Quantum_Computing} motivates the role of uncomputation for quantum software development. Sections~\ref{The_Challenge_of_Implementing_Uncomputation} and \ref{Utilizing_the_Unqomp_Method_in_Qrisp} discuss the possible methods for implementing uncomputation and present the corresponding Qrisp interface. After that, in Section~\ref{Improving_Unqomp_for_the_Needs of_Qrisp} the improvements to established uncomputation methods, which make those more comfortable to use in the scope of Qrisp, are discussed. The final section summarizes and concludes our paper.

\section{Brief Overview of Qrisp}\label{Brief_Overview_of_Qrisp}

The state of the art in programming a quantum computer is currently similar to programming in assembler on a classical computer. Even worse, while assembly programming offers at least some basic instructions, e.g. commands, registers, loops, etc., which are more abstract than accessing the actual hardware gates through binary codes, in quantum computing gates and qubits is the current standard way of programming. Frameworks such as Qiskit~\cite{qiskit} or Cirq~\cite{Heim2020} enable the user to create sub-circuits that can be reused in larger, more complex circuits. However, the handling of the circuits is still quite complicated and tedious. 

The Qrisp framework~\cite{Qrisp_EU} consists of a set of Python modules and language extensions that attempt to overcome the above challenge by abstracting the qubit and gate structure of the underlying circuits as far as possible. This is achieved by conceptually replacing gates and qubits with functions and variables. In this way, it is possible to create much more complex circuits than would be possible with the current low-level approach. It goes without saying that the transition to variables and functions does not mean the end of programming with gates and qubits. The elementary quantum functions must of course still be implemented in the background with the help of gates and qubits. 

\section{The Need for Uncomputation in Quantum Computing}\label{The_Need_for_Uncomputation_in_Quantum_Computing}

Uncomputation is an important aspect of quantum information processing (and reversible computing in general), because it facilitates the efficient use of quantum resources. In classical computing, resource efficiency can be achieved by deleting information from main memory and reusing the deleted bits for other purposes.\footnote{A good correspondence in classical computing to uncomputation in quantum computing is the concept of garbage collection as, e.g., in Java. While classical garbage collection usually simply performs a non-reversible deletion of the collected data, uncomputation in contrast means performing the necessary (reversible) steps to bring the data back into some initial state.} Deleting or resetting a qubit, however, is not a reversible process and is usually performed by measuring the qubit in question and performing a bit flip based on the outcome. This measurement collapses the superposition of other entangled qubits, which are supposed to be unaffected. In many cases this collapse interferes with the quantum algorithm, such that the resulting state can no longer be used.

In some situations, uncomputation is not only relevant as a way to manage quantum resources but is actually required, in order for a quantum algorithm to produce a correct result. One such example is Grover's algorithm~\cite{grover}, which utilizes an oracle function and a diffusing process, in order to search for suitable solutions in a given space based on the logic of the above mentioned oracle function.  Assume that we have two quantum variables, of which one is in a state of uniform superposition:

\begin{align}
\ket{\psi_0} = \sum_{i = 0}^{2^n-1} \ket{i} \ket{0}.
\end{align}

In Grover's algorithm, an oracle now calculates a boolean value $f(i)$, which is required to perform a phase tag on the correct solution, for which the algorithm is searching:

\begin{align}
\ket{\psi_1} = \sum_{i = 0}^{2^n-1} \ket{i} \ket{f(i)}.
\end{align}

After performing the phase tag, the state is:

\begin{align}
\label{oracle_tag}
\ket{\psi_2} = Z_{\ket{f(i)}} \ket{\psi_1}
= \sum_{i = 0}^{2^n-1} (-1)^{f(i)} \ket{i} \ket{f(i)}.
\end{align}

In order for Grover's diffuser to actually amplify the amplitude of the tagged state, we need the state to be disentangled, i.e.,

\begin{align}
\ket{\psi_3} = \sum_{i = 0}^{2^n-1} (-1)^{f(i)} \ket{i} \ket{0}
\end{align}
   
Therefore we need to uncompute the variable containing $f(i)$. This shows clearly why uncomputation is required even
within the implementation and execution of one of the most popular quantum algorithms, in order
to enable the efficient search of solutions for a particular problem based on an oracle function.

\section{The Challenge of Implementing Uncomputation}\label{The_Challenge_of_Implementing_Uncomputation}

In many cases, uncomputing a set of qubits can be achieved by applying the inverse of the steps required for the computation. While this seems like a simple approach, it can be ambiguous, e.g., because it might not be clear which gates actually contributed to the computation. For instance, consider the Z gate in  eq.~\ref{oracle_tag}. In any case, it is a tedious amount of extra programming work, which should be automated.

To remedy this problem, an algorithm called ``Unqomp'' for automatic uncomputation has been devised \cite{unqomp} .
An important advantage of Unqomp is, that it does not follow the philosophy of simply reverting the computation, but rather enables the algorithm to skip  ``un-uncomputation'' or recomputation. Recomputation is a phenomenon that happens if one naively reverts the computation process. If that computation process contained an uncomputation itself, the reversed process contains a recomputation. Unqomp enables skipping that recomputation, by inserting the reverted operations at the correct point within the circuit instead of appending them at the end. While skipping recomputation is generally a truly useful feature, it also has a drawback: Due to the insertion of the reverted operations within the circuit, the qubits holding the values that would potentially be recomputed cannot be deallocated until the uncomputation is completed. If we recompute them, these qubits can be used for other purposes between their un- and recomputation.

In Qrisp the developer can choose\footnote{An algorithm for automatic determination wether a variable should be recomputed has been presented in \cite{Meuli2019}. This method is however not yet implemented within Qrisp} whether they want to perform recomputation: Using the \verb|gate_wrap| decorator, functions of quantum variables can be packed into self-contained gate objects, which are not dissolved by the Unqomp implementation. Any uncomputed quantum variables inside these objects will be recomputed if required. The advantages and drawbacks of uncomputation with and without recomputation are summarized in Fig.~\ref{uncomputation_strategies}.

\pagebreak[4]

\begin{figure}[htb]
    \begin{subfigure}[t]{0.5\textwidth}
\scalebox{0.9}{
\Qcircuit @C=1.0em @R=0.8em @!R { \\
	 	\nghost{{qb_0} :  } & \lstick{{qb_0} :  } & \ctrl{1} & \qw & \ctrl{1} & \qw & \ctrl{1} & \qw & \ctrl{1} & \qw & \qw\\
	 	\nghost{{qb_1} :  } & \lstick{{qb_1} :  } & \ctrl{1} & \qw & \ctrl{1} & \qw & \ctrl{1} & \qw & \ctrl{1} & \qw & \qw\\
	 	\nghost{{qb_2} :  } & \lstick{{qb_2} :  } & \targ & \ctrl{1} & \targ & \gate{\mathrm{V}} & \targ & \ctrl{1} & \targ & \qw & \qw\\
	 	\nghost{{qb_3} :  } & \lstick{{qb_3} :  } & \qw & \ctrl{1} & \qw & \qw & \qw & \ctrl{1} & \qw & \qw & \qw\\
	 	\nghost{{qb_4} :  } & \lstick{{qb_4} :  } & \qw & \targ & \gate{\mathrm{U}} & \qw & \qw & \targ & \qw & \qw & \qw\\
\\ }}
         \caption{\label{recomputation} Conceptual visualisation of an uncomputation, which simply reverts the computation. A value is computed into qubit 4 using some temporary result in qubit 2. Subsequently some process $U$ is applied and afterwards, the computation is reversed. Note that qubit 2 is available for some other process $V$ while $U$ is running. Within Qrisp, automatic qubit allocation permits that the recomputation of qubit 2 can even happen on a differing qubit, enabling automated and flexible resource management.}
     \end{subfigure}
     \ \ \ \ \ 
     \begin{subfigure}[t]{0.5\textwidth}
\scalebox{0.9}{
\Qcircuit @C=1.0em @R=0.8em @!R { \\
	 	\nghost{{qb_0} :  } & \lstick{{qb_0} :  } & \ctrl{1} & \qw & \qw & \qw & \ctrl{1} & \qw & \qw & \qw\\
	 	\nghost{{qb_1} :  } & \lstick{{qb_1} :  } & \ctrl{1} & \qw & \qw & \qw & \ctrl{1} & \qw & \qw & \qw\\
	 	\nghost{{qb_2} :  } & \lstick{{qb_2} :  } & \targ & \ctrl{1} & \qw & \ctrl{1} & \targ & \gate{\mathrm{V}} & \qw & \qw\\
	 	\nghost{{qb_3} :  } & \lstick{{qb_3} :  } & \qw & \ctrl{1} & \qw & \ctrl{1} & \qw & \qw & \qw & \qw\\
	 	\nghost{{qb_4} :  } & \lstick{{qb_4} :  } & \qw & \targ & \gate{\mathrm{U}} & \targ & \qw & \qw & \qw & \qw\\
\\ }}
         \caption{Conceptual visualisation of an uncomputation as done in Unqomp. Note, that only 4 instead of 6 Toffoli gates are needed, because qubit 2 does not need to be recomputed. However, this comes at the cost that qubit 2 is not uncomputed until qubit 4 is uncomputed. Therefore the process $V$ needs to wait until after the uncomputation is done. If the duration of the (re)computation is small compared to the duration of $U$ and $V$, this implementation requires twice the time compared to the approach in \ref{recomputation}.}
         \label{unqomp_strategy}
         \end{subfigure} \caption{\label{uncomputation_strategies} Conceptual visualisation of different uncomputation strategies.}
\end{figure}

\section{Utilizing the Unqomp Method in Qrisp}\label{Utilizing_the_Unqomp_Method_in_Qrisp}

Unqomp has been implemented in Qrisp and we provide two ways to call this function as described in the following subsections.

\subsection{Decorator based Uncomputation in Qrisp}
The first option is the \verb|auto_uncompute| decorator, which automatically uncomputes all local quantum variables, i.e., \verb|QuantumVariable| class instances, of a function.
To demonstrate this functionality, we create a function which returns a \verb|QuantumBool| instance in Qrisp containing the AND value of the three associated inputs. To do so, this function creates a local QuantumBool, which stores the temporary result of the AND value of the first two inputs.

\medskip
\begin{lstlisting}[language = Python, numbers = none]
from qrisp import QuantumBool, mcx

def triple_AND(a, b, c):

  local = QuantumBool()
  result =  QuantumBool()
  
  mcx([a, b], local_quantum_bool)
  mcx([local_quantum_bool, c], result)
  
  return result

a = QuantumBool()
b = QuantumBool()
c = QuantumBool()

result = triple_AND(a, b, c)
\end{lstlisting}
\medskip

Executing this piece of code and visualizing the \verb|.qs| attribute (the \verb|QuantumSession|\footnote{In Qrisp, a \texttt{QuantumSession} contains all the high-level objects and steer the interaction with the hardware or simulation backend.}) of any of the participating \verb|QuantumVariable|s produces the following circuit:
\begin{center}
\scalebox{1.0}{
\Qcircuit @C=1.0em @R=0.8em @!R { 
	 	\nghost{{a.0} :  } & \lstick{{a.0} :  } & \ctrl{1} & \qw & \qw & \qw\\
	 	\nghost{{b.0} :  } & \lstick{{b.0} :  } & \ctrl{1} & \qw & \qw & \qw\\
	 	\nghost{{local.0} :  } & \lstick{{local.0} :  } & \targ & \ctrl{1} & \qw & \qw\\
	 	\nghost{{c.0} :  } & \lstick{{c.0} :  } & \qw & \ctrl{1} & \qw & \qw\\
	 	\nghost{{result.0} :  } & \lstick{{result.0} :  } & \qw & \targ & \qw & \qw\\
}}
\end{center}
We see that the qubit containing the local \verb|QuantumBool| does not end up in the $\ket{0}$ state, if \verb|a| and \verb|b| are in the $\ket{1}$ state. Therefore this qubit is still entangled and cannot be reused for other purposes.

We will now rewrite this function with the \verb|auto_uncompute| decorator:

\medskip
\begin{lstlisting}[language = Python, numbers = none]
from qrisp import QuantumBool, mcx, auto_uncompute

@auto_uncompute
def triple_AND(a, b, c):

  local = QuantumBool()
  result =  QuantumBool()
  
  mcx([a, b], local_quantum_bool)
  mcx([local_quantum_bool, c], result)
  
  return result

a = QuantumBool()
b = QuantumBool()
c = QuantumBool()

result = triple_AND(a, b, c)
\end{lstlisting}
\medskip

This snippet produces the following \verb|QuantumCircuit|:

\begin{center}
\scalebox{1.0}{
\Qcircuit @C=1.0em @R=0.8em @!R {
	 	\nghost{{a.0} :  } & \lstick{{a.0} :  } & \multigate{3}{\mathrm{pt2cx}}_<<<{0} & \qw & \multigate{3}{\mathrm{pt2cx}}_<<<{0} & \qw & \qw\\
	 	\nghost{{b.0} :  } & \lstick{{b.0} :  } & \ghost{\mathrm{pt2cx}}_<<<{1} & \qw & \ghost{\mathrm{pt2cx}}_<<<{1} & \qw & \qw\\
	 	\nghost{{c.0} :  } & \lstick{{c.0} :  } & \ghost{\mathrm{pt2cx}} & \ctrl{1} & \ghost{\mathrm{pt2cx}} & \qw & \qw\\
	 	\nghost{{local.0} :  } & \lstick{{local.0} :  } & \ghost{\mathrm{pt2cx}}_<<<{2} & \ctrl{1} & \ghost{\mathrm{pt2cx}}_<<<{2} & \qw & \qw\\
	 	\nghost{{result.0} :  } & \lstick{{result.0} :  } & \qw & \targ & \qw & \qw & \qw\\
}}
\end{center}

As an effect of the \verb|auto_uncompute| decorator, we see that the multi-controlled X-gate acting on the local QuantumBool has been replaced by a gate called \textit{pt2cx}, which stands for \textit{phase tolerant two controlled X} gate. Phase tolerant logic synthesis is an efficient way of encoding boolean functions into quantum circuits \cite{Seidel_2023}. Using this way of synthesizing boolean functions, we significantly reduce the required resources at the cost of producing an extra phase, depending on the input constellation. However, this phase is reverted once the inverted gate is performed on the same input. For the case of two controls, this is implemented as the so-called Margolus gate~\cite{margolus}. Hence, we can observe how the usage of the \verb|auto_uncompute| decorator in Qrisp can modify the underlying quantum circuits in a way, such that resources, i.e., qubits, that are not needed can be freed and disentangled when required.

\subsection{Uncomputation over Qrisp QuantumVariables}

The second way of invoking uncomputation is the \verb|.uncompute| method of the \verb|QuantumVariable| class in Qrisp. 
A \verb|QuantumVariable| in Qrisp is an abstraction that allows the user to manage several qubits simultaneously and to implement data types such as \verb|QuantumFloat|, \verb|QuantumChar| or \verb|QuantumString|.

We demonstrate the use of the \verb|uncompute| method of the \verb|QuantumVariable| class based on the example from above:

\medskip
\begin{lstlisting}[language = Python, numbers = none]
def triple_AND(a, b, c):

  local = QuantumBool()
  result =  QuantumBool()
  
  mcx([a, b], local_quantum_bool)
  mcx([local_quantum_bool, c], result)
  
  local_quantum_bool.uncompute()
  
  return result

a = QuantumBool()
b = QuantumBool()
c = QuantumBool()

result = triple_AND(a, b, c)
\end{lstlisting}
\medskip

This produces the following quantum circuit:
\begin{center}
\scalebox{1.0}{
\Qcircuit @C=1.0em @R=0.8em @!R {
	 	\nghost{{a.0} :  } & \lstick{{a.0} :  } & \multigate{2}{\mathrm{pt2cx}}_<<<{0} & \qw & \multigate{2}{\mathrm{pt2cx}}_<<<{0} & \qw & \qw\\
	 	\nghost{{b.0} :  } & \lstick{{b.0} :  } & \ghost{\mathrm{pt2cx}}_<<<{1} & \qw & \ghost{\mathrm{pt2cx}}_<<<{1} & \qw & \qw\\
	 	\nghost{{local.0} :  } & \lstick{{local.0} :  } & \ghost{\mathrm{pt2cx}}_<<<{2} & \ctrl{1} & \ghost{\mathrm{pt2cx}}_<<<{2} & \qw & \qw\\
	 	\nghost{{c.0} :  } & \lstick{{c.0} :  } & \qw & \ctrl{1} & \qw & \qw & \qw\\
	 	\nghost{{result.0} :  } & \lstick{{result.0} :  } & \qw & \targ & \qw & \qw & \qw\\
}}
    
\end{center}

The  \verb|uncompute| method and the \verb|auto_uncompute| decorator automatically call the \verb|delete| method after successful uncomputation, which frees the used qubit. If we allocate a new \verb|QuantumBool|, the compiled quantum circuit will reuse that qubit:

\medskip\begin{lstlisting}[language = Python, numbers = none]
from qrisp import cx
d = QuantumBool()
cx(result, d)
\end{lstlisting}
\medskip

And the quantum circuit is updated to:

\begin{center}
\scalebox{1.0}{
\Qcircuit @C=1.0em @R=0.8em @!R {
	 	\nghost{{a.0} :  } & \lstick{{a.0} :  } & \multigate{2}{\mathrm{pt2cx}}_<<<{0} & \qw & \multigate{2}{\mathrm{pt2cx}}_<<<{0} & \qw & \qw & \qw\\
	 	\nghost{{b.0} :  } & \lstick{{b.0} :  } & \ghost{\mathrm{pt2cx}}_<<<{1} & \qw & \ghost{\mathrm{pt2cx}}_<<<{1} & \qw & \qw & \qw\\
	 	\nghost{{d.0} :  } & \lstick{{d.0} :  } & \ghost{\mathrm{pt2cx}}_<<<{2} & \ctrl{1} & \ghost{\mathrm{pt2cx}}_<<<{2} & \targ & \qw & \qw\\
	 	\nghost{{c.0} :  } & \lstick{{c.0} :  } & \qw & \ctrl{1} & \qw & \qw & \qw & \qw\\
	 	\nghost{{result.0} :  } & \lstick{{result.0} :  } & \qw & \targ & \qw & \ctrl{-2} & \qw & \qw\\
}}
\end{center}

We can see how the qubit holding the local \verb|QuantumBool| has been reused to now accomodate the \verb|QuantumBool d|.

\newcommand{\mycomment}[1]{}

\mycomment{
\subsection{Uncomputing multi-qubit gates}
In some cases, the entanglement structure of a set of qubits only permits uncomputation if all of them are uncomputed jointly. In this situation, setting the keyword argument \verb|do_it| to False marks a \verb|QuantumVariable| for uncomputation but does not actually perform it. On the next call with \verb|do_it = True|, the whole batch is uncomputed jointly. The following example illustrates this behaviour:

\medskip
\begin{lstlisting}[language = Python, numbers = none]
from qrisp import gate_wrap

@gate_wrap
def fanout(a, b, c):
   cx(a,b)
   cx(a,c)

a = QuantumBool()
b = QuantumBool()
c = QuantumBool()

fanout(a,b,c)
\end{lstlisting}
\medskip

The \verb|gate_wrap| decorator bundles the quantum instructions inside that function into a single gate:

\begin{center}

\scalebox{1.0}{
\Qcircuit @C=1.0em @R=1.0em @!R {
    \nghost{{a.0} :  } & \lstick{{a.0} :  } & \multigate{2}{\mathrm{fanout}}_<<<{0} & \qw & \qw\\
    \nghost{{b.0} :  } & \lstick{{b.0} :  } & \ghost{\mathrm{fanout}}_<<<{1} & \qw & \qw\\
    \nghost{{c.0} :  } & \lstick{{c.0} :  } & \ghost{\mathrm{fanout}}_<<<{2} & \qw & \qw\\
}}
\end{center}

Trying to call \verb|b.uncompute()| results in an error, because \verb|c| would also be uncomputed if the fanout gate was reverted. Obviously the same would happen to \verb|c.uncompute()|. We overcome this problem by first queuing one of these \verb|QuantumVariable|s for uncomputation without actually performing it:

\medskip
\begin{lstlisting}[language = Python, numbers = none]
b.uncompute(do_it = False)
c.uncompute()
\end{lstlisting}
\medskip

This generates the following circuit:
\begin{center}
\scalebox{1.0}{
\Qcircuit @C=1.0em @R=1.0em @!R {
	 	\nghost{{a.0} :  } & \lstick{{a.0} :  } & \multigate{2}{\mathrm{fanout}}_<<<{0} & \multigate{2}{\mathrm{fanout\_dg}}_<<<{0} & \qw & \qw\\
	 	\nghost{{b.0} :  } & \lstick{{b.0} :  } & \ghost{\mathrm{fanout}}_<<<{1} & \ghost{\mathrm{fanout\_dg}}_<<<{1} & \qw & \qw\\
	 	\nghost{{c.0} :  } & \lstick{{c.0} :  } & \ghost{\mathrm{fanout}}_<<<{2} & \ghost{\mathrm{fanout\_dg}}_<<<{2} & \qw & \qw\\
}}
\end{center}

The \textit{dg} in \textit{fanout\_dg} stands for \textit{dagger} and indicates that this gate is the inverse of \textit{fanout}.
This problem might seem a bit constructed, because the \textit{fanout} gate could in principle be decomposed into a sequence of CNOT gates, which would face no such issue. Avoiding to decompose gates during uncomputation allows the user to deploy the previously mentioned recomputation (if required). An additional feature relatung to this will be highlighted in Section~\ref{synth_gates}.}

\subsection{Case Study: Solving Quadratic Equations using Grover's Algorithm}

We want to close this section with an example given in our previous article \cite{Qrisp}, where we highlighted how Qrisp can be used to solve a quadratic equation using Grover's algorithm. To achieve this in Qrisp,  we employed manual uncomputation using the \verb|invert| environment. Using the \verb|auto_uncompute| decorator we reformulate the code from \cite{Qrisp} as follows:
\newpage
\medskip
\begin{lstlisting}[language = Python, numbers = none]
from qrisp import QuantumFloat, h, z, auto_uncompute

@auto_uncompute
def sqrt_oracle(qf):
    z(qf*qf == 0.25)

qf = QuantumFloat(3, -1, signed = True)
n = qf.size
iterations = int((n/2)**0.5) + 1
h(qf)

from qrisp.grover import diffuser
for i in range(iterations):
    sqrt_oracle(qf)
    diffuser(qf)

result = qf.get_measurement(plot = True)
\end{lstlisting}
\medskip

The function \verb|sqrt_oracle| applies a Z gate onto the \verb|QuantumBool| generated by evaluating the comparison. Note that this \verb|QuantumBool| and the result of the multiplication \verb|qf*qf| (a \verb|QuantumFloat|) are uncomputed automatically.

The histogram of the simulated outcome probabilities is shown in Figure~\ref{fig:grover_plot}, demonstrating
the correctness of the quadratic equation solving procedure, while in parallel uncomputing qubits using the Qrisp
infrastructure for the efficient execution of the Grover's algorithm.

\begin{figure}
    \centering
    \includegraphics[scale = 0.5]{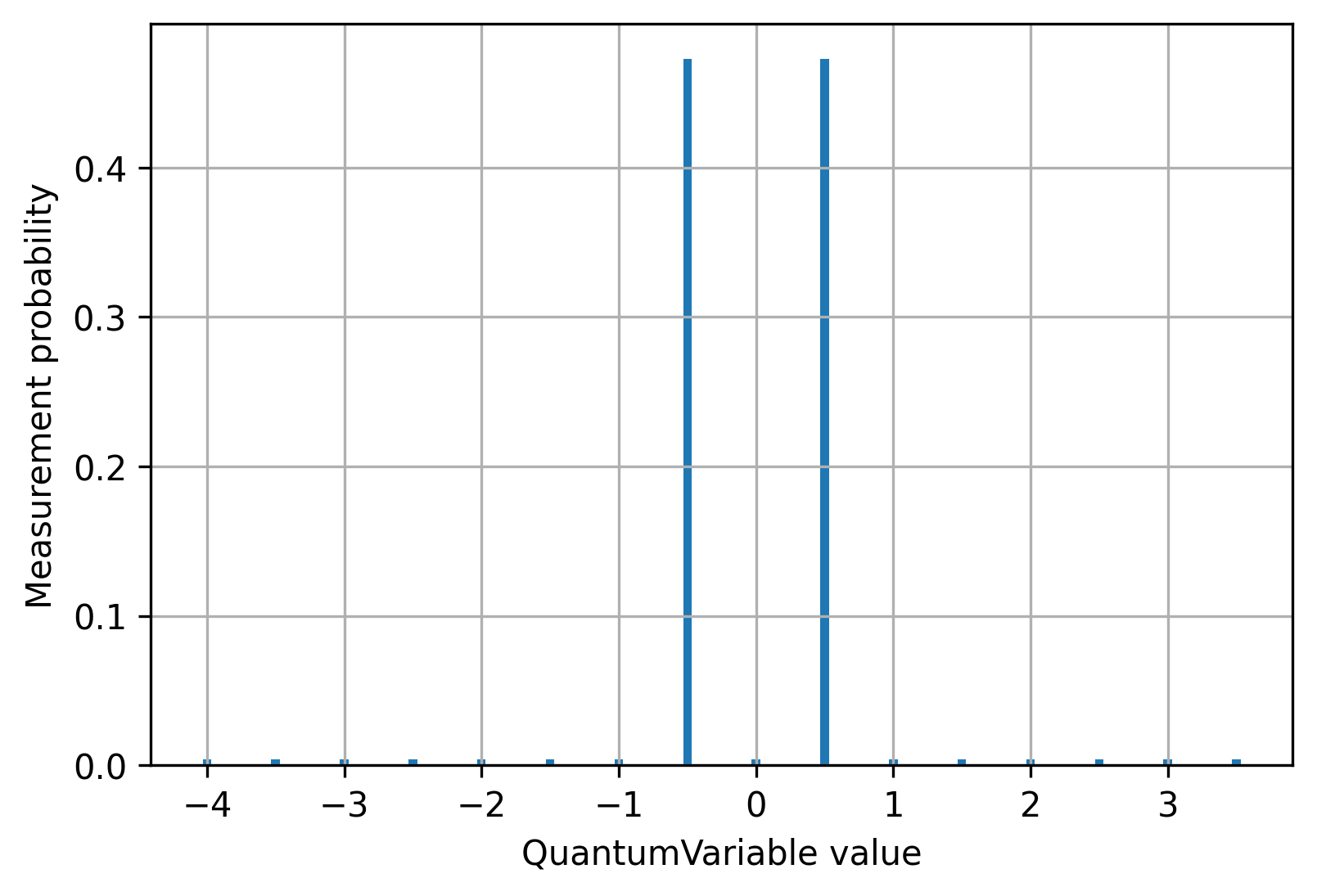}
    \caption{Histogram of the simulation results of the quadratic solver.}
    \label{fig:grover_plot}
\end{figure}

\section{Uncomputation beyond Unqomp}\label{Improving_Unqomp_for_the_Needs of_Qrisp}

Even though the Unqomp algorithm provides a very convenient way of solving automatic uncomputation, it comes with a few restrictions. We will not detail them too deeply here as they are well documented in the original publication~\cite{unqomp}, however, the most important one can be overcome using the Qrisp implementation of the algorithm which is described in the following section.

\subsection{Uncomputing Synthesized Gates}
\label{synth_gates}
The main restriction Unqomp imposes is that only a certain class of gates can be uncomputed, which the authors of Unqomp call \textit{qfree}: A quantum gate is \textit{qfree} if it neither introduces nor destroys states of superposition. In more mathematical terms, this implies that the unitary matrix of a \textit{qfree} gate can only have a single non-zero entry per column.

This is a serious restriction, since many quantum functions make use of non-\textit{qfree} gates such as the Hadamard, even though their net-effect is \textit{qfree}. An example of such a situation is Fourier arithmetic (of which Qrisp's arithmetic module makes heavy use). Even though the multiplication function
\begin{align}
U_{mul}\ket{a}\ket{b}\ket{0} = \ket{a}\ket{b}\ket{a \cdot b}
\end{align}
itself is qfree, it makes use of Hadamard gates, which are not qfree.
In order to overcome this major restriction, the Qrisp implementation of Unqomp does not decompose gates but instead check the combined gate for \textit{qfree}-ness.

This feature, in combination with the previously mentioned \verb|gate_wrap| decorator, can be used to create quantum functions that can be successfully uncomputed even though their inner workings contain non-\textit{qfree} gates.

We demonstrate this with an implementation of the Margolus gate taken from \cite{Maslov_2016}

\medskip
\begin{lstlisting}[language = Python, numbers = none]
from qrisp import cx, ry
from numpy import pi

def margolus(control, target):
    ry(pi/4, target[0])
    cx(control[1], target[0])
    ry(-pi/4, target[0])
    cx(control[0], target[0])
    ry(pi/4, target[0])
    cx(control[1], target[0])
    ry(-pi/4, target[0])
\end{lstlisting}
\medskip

While the Margolus gate itself is \textit{qfree}, the constituents (to be specific, the RY gates) are not. Therefore the following code results in an error:

\medskip
\begin{lstlisting}[language = Python, numbers = none]
from qrisp import QuantumVariable

control = QuantumVariable(2)
target = QuantumVariable(1)

margolus(control, target)

target.uncompute()
\end{lstlisting}
\medskip

We can circumvent this error by applying the \verb|gate_wrap| decorator to \verb|margolus|.

\medskip
\begin{lstlisting}[language = Python, numbers = none]
from qrisp import gate_wrap
control = QuantumVariable(2)
target = QuantumVariable(1)

margolus_wrapped = gate_wrap(margolus)
margolus_wrapped(control, target)

target.uncompute()
\end{lstlisting}
\medskip

Qrisp now automatically checks the combined gate for \textit{qfree}-ness instead of the constituents. Since \textit{qfree}-ness corresponds to the unitary having only a single non-zero entry per column, this property can be checked in linear time if the unitary is known.

\subsection{Permeability}

\textit{Permeability} is a concept, which is introduced in the Qrisp implementation of Unqomp and generalizes the notion of a controlled operation. The permeability status of a gate on a certain input qubit $q$ decides how this gate is treated, when $q$ is uncomputed. We choose this definition, because it permits a broader scope of uncomputable circuits than Unqomp \cite{unqomp} and can be decided in linear time, if the unitary matrix is available.
A gate is called permeable on qubit $i$, if it commutes with the $Z$ operator on this qubit.

\begin{align}
   \text{U} \text{ is permeable on qubit i} \Leftrightarrow \text{U} \text{Z}_i = \text{Z}_i \text{U}
\end{align}

This implies that any controlled gate is permeable on its control qubit because

\begin{align*}
\text{Z}_0 \text{cU} &= \begin{pmatrix} \mathbbm{1} & 0 \\ 0 & -\mathbbm{1} \end{pmatrix} \begin{pmatrix} \mathbbm{1} & 0 \\ 0 & U \end{pmatrix}\\
&= \begin{pmatrix} \mathbbm{1} & 0 \\ 0 & -U \end{pmatrix}\\
&= \text{cU} \text{Z}_0
\end{align*}

However, not every permeable unitary is equal to a controlled gate, for example, $\text{Z}_0 \text{CX}_{01}$.

Why is this property relevant for the Unqomp algorithm? The definining feature of the DAG (directed acyclic graph) representation of Unqomp, is the fact that multiple ``control edges'' can be connected to the same node. This is due to the commutative property of control knobs:
\begin{center}
\scalebox{1.0}{
\Qcircuit @C=1.0em @R=0.2em @!R {
	 	\nghost{} & \lstick{q_0 :} & \targ & \qw & \qw&&& \qw & \targ & \qw & \\
	 	\nghost{} & \lstick{q_1 :} &  \ctrl{-1} & \ctrl{1} & \qw & \push{\rule{.3em}{0em}=\rule{.3em}{0em}} && \ctrl{1} & \ctrl{-1} & \qw\\
	 	\nghost{} & \lstick{q_2 :} & \qw & \targ & \qw &&& \targ & \qw & \qw \\
}}
\end{center}

The DAG representation of Uncomp no longer contains any information about the order in which controlled gates are applied. It therefore supports a flexible insertion of the inverse ``uncomputation'' gates, since it is not necessary to specify, the concrete position in a sequence of controlled gates. In other words, Unqomp's DAG representation abstracts away equivalence classes of gate sequence permutations based on non-trivial commutation relations.

At this point we need the following theorem, which is proved in the \hyperref[appendix]{appendix}:

\begin{theorem}
\label{commutativity_theorem} Let $U \in U(2^n)$ and $V \in U(2^m)$ be $n$ and $m$ qubit operators, respectively. If $U$ is permeable on its last $p$ qubits and $V$ is permeable on its first $p$ qubits, the two operators commute, if they intersect only on these qubits:

\begin{align}
(U \otimes \mathbbm{1}^{\otimes m-p}) (\mathbbm{1}^{\otimes n-p} \otimes V) =  (\mathbbm{1}^{\otimes n-p} \otimes V) (U \otimes \mathbbm{1}^{\otimes m-p})
\end{align}
\end{theorem}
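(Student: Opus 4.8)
The plan is to reduce the claim to a structural fact about permeable unitaries --- permeability on a set of qubits forces a block-diagonal form with respect to the computational basis of those qubits --- after which commutativity follows from a short computation with orthogonal projectors.

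First I would isolate the key lemma: a unitary $W$ satisfies $W Z_i = Z_i W$ for every qubit $i$ in an index set $S$ if and only if $W$ commutes with every computational-basis projector on the qubits of $S$. This holds because $Z_i$ acts as $\ket{0}\bra{0} - \ket{1}\bra{1}$ on qubit $i$, so commuting with $Z_i$ is equivalent to commuting with its two eigenprojectors $(\mathbbm{1} \pm Z_i)/2$; forming products of these over $i \in S$ produces exactly the joint projectors $\ket{x}\bra{x}$ onto each basis string $x \in \{0,1\}^S$. Hence $W$ preserves each joint eigenspace and is block-diagonal in that basis.

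Applying this to $U$ (permeable on its last $p$ qubits) and to $V$ (permeable on its first $p$ qubits) yields the decompositions
\begin{align*}
U = \sum_{x \in \{0,1\}^p} A_x \otimes \ket{x}\bra{x}, \qquad V = \sum_{y \in \{0,1\}^p} \ket{y}\bra{y} \otimes B_y,
\end{align*}
where $A_x \in U(2^{n-p})$ acts on $U$'s non-shared qubits and $B_y \in U(2^{m-p})$ acts on $V$'s non-shared qubits; unitarity of each block follows from unitarity of $U$ and $V$. I would then substitute these into both sides of the claimed identity on the joint space $(\mathbb{C}^2)^{\otimes(n-p)} \otimes (\mathbb{C}^2)^{\otimes p} \otimes (\mathbb{C}^2)^{\otimes(m-p)}$. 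Using $\langle x|y\rangle = \delta_{xy}$, both products collapse to the same single sum $\sum_x A_x \otimes \ket{x}\bra{x} \otimes B_x$, which proves the equality.

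The main obstacle is the key lemma rather than the closing computation: one has to argue that commuting with each $Z_i$ \emph{individually} already forces simultaneous block-diagonality on all of $S$. The reason is that the $Z_i$ mutually commute, so commuting with all of them is the same as commuting with the whole abelian algebra they generate, and this algebra contains precisely the joint eigenprojectors $\ket{x}\bra{x}$. Once this structure is established, the orthogonality of the projectors does all the remaining work, and because $A_x$ and $B_y$ live on disjoint sets of non-shared qubits they never interfere, so the two orderings agree term by term.
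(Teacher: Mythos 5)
Your proposal is correct and follows the same overall architecture as the paper: both reduce the claim to a block-diagonal expansion of a permeable unitary (the paper's Theorem~\ref{expansion_theorem}, $U = \sum_i \ket{i}\bra{i} \otimes \tilde{U}_i$) and then close with the identical computation using $\langle i | j\rangle = \delta_{ij}$, collapsing both orderings to $\sum_i \tilde{U}_i \otimes \ket{i}\bra{i} \otimes \tilde{V}_i$. Where you genuinely differ is in how the expansion lemma is established. The paper proves it by an explicit index computation: for $p=1$ it writes $U = \sum_{i,j}\ket{i}\bra{j}\otimes \hat{U}_{ij}$, expands the commutator with $Z_0$ to show the off-diagonal blocks $\hat{U}_{01}, \hat{U}_{10}$ vanish, and then extends to general $p$ by induction, applying the $p=1$ argument to each block. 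You instead observe that commuting with $Z_i$ is equivalent to commuting with its eigenprojectors $(\mathbbm{1}\pm Z_i)/2$, and that the abelian algebra generated by the commuting $Z_i$ over $i \in S$ contains all joint projectors $\ket{x}\bra{x}$, so $W = \sum_x \Pi_x W \Pi_x$ follows at once from $\Pi_x W \Pi_y = \delta_{xy} W\Pi_x$. Your route dispenses with the induction entirely and makes the structural reason for block-diagonality transparent; the paper's route is more elementary and hands-on, and produces the blocks $\tilde{U}_i$ constructively, which is closer to how the permeability check is actually implemented (inspecting off-diagonal blocks for non-zero entries). One small point worth making explicit in your write-up: the claim that commuting with each $Z_i$ individually yields simultaneous block-diagonality does all the work, and your justification via the generated abelian algebra is the right one --- but you should state the one-line verification $\Pi_x W \Pi_y = W \Pi_x \Pi_y = \delta_{xy} W \Pi_x$ rather than leaving it implicit.
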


According to Theorem~\ref{commutativity_theorem}, it is not only control knobs that possess the above non-trivial commutation relation but the same is also true for two general gates, $U, V$ if they are both permeable on $q_1$:

\begin{center}
\scalebox{1.0}{
\Qcircuit @C=1.0em @R=1.0em @!R {
	 	\nghost{{q\_0} :  } & \lstick{{q_0} :  } & \multigate{1}{\mathrm{U}} & \qw & \qw &&& \qw & \multigate{1}{\mathrm{U}} & \qw &\qw \\
	 	\nghost{{q\_1} :  } & \lstick{{q_1} :  } & \ghost{\mathrm{U}} & \multigate{1}{\mathrm{V}} & \qw & \push{\rule{.3em}{0em}=\rule{.3em}{0em}} &&\multigate{1}{\mathrm{V}} & \ghost{\mathrm{U}} & \qw & \qw\\
	 	\nghost{{q\_2} :  } & \lstick{{q_2} :  } & \qw & \ghost{\mathrm{V}} & \qw &&& \ghost{\mathrm{V}} & \qw & \qw & \qw\\
}}
\end{center}

We therefore modify the Unqomp algorithm in such a way that, every time it determines whether a gate is controlled on a certain qubit, we instead return the permeability status on that qubit. This simple modification has proved to provide a uniform way of treating uncomputation of synthesized gates. In addition, it also expanded the class of circuits that can be uncomputed. For example, an important class of synthesized gates, that is permeable but not controlled, is quantum logic synthesis.

As mentioned before, permeability can be determined efficiently. This is due to the fact, that according to Theorem ~\ref{expansion_theorem} (in the appendix), the matrix representation is block diagonal. For instance, if $p = 2$:

\begin{align}
    U = \begin{pmatrix}
         \tilde{U}_0 & 0 & 0 & 0\\
        0 & \tilde{U}_1 & 0 & 0 \\
        0 & 0 & \tilde{U}_2 & 0 \\
        0 & 0 & 0 & \tilde{U}_3
    \end{pmatrix}
\end{align}

Therefore, permeability can be decided by iteratively checking the off-diagonal blocks for non-zero entries.

\section{Summary and Conclusions} \label{Summary_and_Conclusions}

In this paper, we gave a short introduction to why uncomputation is necessary in general and how to perform it. We introduced two ways of implementing and using the state-of-the-art algorithm Unqomp \cite{unqomp} (the \verb|auto_uncompute| decorator and the \verb|uncompute| method of the \verb|QuantumVariable| class) in the Qrisp high-level programming framework. Moreover, we gave a short example of how to deploy these techniques, in order to have an even more elegant formulation of solving quadratic equations using Grover's algorithm~\cite{grover} than in our previous article about Qrisp \cite{Qrisp}.
Finally, we elaborated on our extension of the Unqomp algorithm, which supports the uncomputation of more general quantum circuits an efficient way of deciding the necessary properties (permeability, qfree-ness) required for it to work.

\bibliographystyle{plain} 
\bibliography{sources.bib}

\section*{Appendix: Proof of Theorem 1}
\label{appendix}
This appendix contains the proofs for theorem \ref{commutativity_theorem} from section \ref{Improving_Unqomp_for_the_Needs of_Qrisp}. In order to derive the arguments for theorem \ref{commutativity_theorem}, we first need the following theorem.

\begin{theorem}
    \label{expansion_theorem}
    Let $U \in U(2^n)$ be an $n$-qubit operator. If $U$ is permeable on the first $p$ qubits, there are operators $\tilde{U}_0, \tilde{U}_1 .. \tilde{U}_{2^p -1}$ such that:
    \begin{align}
        \label{expansion}
        U = \sum_{i = 0}^{2^p-1} \ket{i}\bra{i} \otimes \tilde{U}_i.
    \end{align}
\end{theorem}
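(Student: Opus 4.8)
The plan is to exploit the fact that permeability on the first $p$ qubits is precisely the statement that $U$ commutes with each of the single-qubit operators $Z_0, \dots, Z_{p-1}$, where $Z_j$ denotes $Z$ acting on qubit $j$ tensored with identity elsewhere, and to translate this commutation into commutation with the computational-basis projectors of the first register. For each bitstring $i \in \{0,1\}^p$, I would introduce the projector
\[
P_i = \ket{i}\bra{i} \otimes \mathbbm{1}^{\otimes n-p}
\]
onto the subspace in which the first $p$ qubits are fixed to $\ket{i}$. The bridging observation is that each $P_i$ lies in the (commutative) algebra generated by the $Z_j$, since
\[
P_i = \prod_{j=0}^{p-1} \frac{\mathbbm{1} + (-1)^{i_j}\, Z_j}{2},
\]
where $i_j$ is the $j$-th bit of $i$; this uses $(\mathbbm{1}+Z)/2 = \ket{0}\bra{0}$ and $(\mathbbm{1}-Z)/2 = \ket{1}\bra{1}$ on the relevant qubit. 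Because $U$ commutes with every $Z_j$ by hypothesis, it commutes with each $P_i$.

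The remaining argument is a block decomposition. The projectors are orthogonal, $P_i P_j = \delta_{ij} P_i$, and resolve the identity, $\sum_i P_i = \mathbbm{1}$. Inserting this resolution on both sides of $U$ gives
\[
U = \Bigl(\sum_i P_i\Bigr) U \Bigl(\sum_j P_j\Bigr) = \sum_{i,j} P_i U P_j = \sum_i P_i U P_i,
\]
where the off-diagonal terms vanish because, for $i \neq j$, commutativity and orthogonality yield $P_i U P_j = P_i P_j U = 0$. Each surviving term $P_i U P_i$ annihilates everything outside the subspace $\ket{i} \otimes \mathcal{H}_{n-p}$ and maps that subspace into itself, so it factorises as $P_i U P_i = \ket{i}\bra{i} \otimes \tilde{U}_i$ with $\tilde{U}_i = (\bra{i}\otimes\mathbbm{1}^{\otimes n-p})\, U\, (\ket{i}\otimes\mathbbm{1}^{\otimes n-p})$ an operator on the remaining $n-p$ qubits. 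Summing over $i$ produces the claimed expansion.

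I expect the main obstacle to be the bridging step rather than the bookkeeping: permeability is phrased purely in terms of the $Z_j$ operators, and the real content is recognising that the computational-basis projectors $P_i$ are polynomials in these mutually commuting $Z_j$, so that they automatically inherit commutativity with $U$. Once that is established, the passage to diagonal blocks is routine linear algebra. If unitarity of each $\tilde{U}_i$ is needed downstream (the statement only asks for operators), it follows immediately by restricting the unitary $U$ to the invariant subspace $\ket{i}\otimes\mathcal{H}_{n-p}$.
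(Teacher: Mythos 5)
Your proof is correct, but it organizes the argument differently from the paper. The paper first handles $p=1$ by expanding $U = \sum_{i,j}\ket{i}\bra{j}\otimes \hat{U}_{ij}$ and computing the commutator with $Z_0$ explicitly to show the off-diagonal blocks $\hat{U}_{01}, \hat{U}_{10}$ vanish, and then reaches general $p$ by induction, peeling off one qubit at a time. You instead dispose of all $p$ qubits at once: you observe that every computational-basis projector $P_i = \ket{i}\bra{i}\otimes\mathbbm{1}^{\otimes n-p}$ is a polynomial in the mutually commuting $Z_0,\dots,Z_{p-1}$ via $P_i = \prod_j \tfrac{\mathbbm{1}+(-1)^{i_j}Z_j}{2}$, so $U$ inherits commutativity with each $P_i$, and then a single resolution-of-identity computation $U = \sum_{i,j} P_i U P_j = \sum_i P_i U P_i$ kills the off-diagonal blocks. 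The underlying mechanism (vanishing of off-diagonal blocks from commutation with diagonal operators) is the same, but your projector-algebra observation is the genuinely different ingredient: it replaces the paper's induction with a one-shot argument and makes the block-diagonal structure for general $p$ immediate. The paper's route is more elementary and purely computational; yours is shorter, cleaner, and also makes transparent the supplementary fact (which the paper does not state) that each $\tilde{U}_i$ is itself unitary, since it is the restriction of $U$ to an invariant subspace. No gaps.
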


\begin{proof}
    We will treat the case $p = 1$ first and generalize via induction afterwards.\\
    We start by inserting identity operators $\mathbbm{1} = \sum_{i = 0} \ket{i}\bra{i}$:
    \begin{align}
        U = &\mathbbm{1} U \mathbbm{1} \\
        =& \sum_{i,j = 0}^1 \ket{i}\bra{i} U \ket{j}\bra{j}\\
        =& \sum_{i,j = 0}^1 \ket{i} \bra{j} \otimes \hat{U}_{ij}.
    \end{align}
    where $\hat{U}_{ij} = \bra{i} U \ket{j}$.
    Due to the permeability condition, we have
    \begin{align}
        0 =& Z_0 U - U Z_0\\
        = & \left( \sum_{k =0}^1 (-1)^k \ket{k}\bra{k} \otimes \mathbbm{1}^{\otimes{n-1}}\right) \left(\sum_{i,j = 0}^1 \ket{i} \bra{j} \otimes \hat{U}_{ij}\right)\\
        -& \left( \sum_{i,j = 0}^1 \ket{i} \bra{j} \otimes \hat{U}_{ij}\right) \left(\sum_{k = 0}^1 (-1)^k \ket{k}\bra{k} \otimes \mathbbm{1}^{\otimes{n-1}}\right)\\
        = & \sum_{i,j,k = 0}^1 (-1)^k \left(\ket{k} \langle k | i \rangle \bra{j} \otimes \hat{U}_{ij} - \ket{i} \langle j | k \rangle \bra{k} \otimes \hat{U}_{ij}\right)\\
        = & \sum_{i,j,k = 0}^1 (-1)^k \left(\ket{k} \langle k | i \rangle \bra{j} - \ket{i} \langle j | k \rangle \bra{k}\right) \otimes \hat{U}_{ij}\\
        = & \sum_{i,j = 0}^1 \left((-1)^i \ket{i} \bra{j} - (-1)^j \ket{i} \bra{j}\right) \otimes \hat{U}_{ij}.
    \end{align}
    From this form, we see that the index constellations, where $i = j$ cancel out. We end up with
    \begin{align}
        0 = 2 (\ket{0}\bra{1} \otimes \hat{U}_{01} - \ket{1}\bra{0} \otimes \hat{U}_{10}).
    \end{align}
    Since both summands act on disjoint subspaces, we conclude
    \begin{align}
        \hat{U}_{01} = 0 = \hat{U}_{10}.
    \end{align}
    Finally, we set
    \begin{align}
        \tilde{U}_0 = \hat{U}_{00}\\
        \tilde{U}_1 = \hat{U}_{11}\\
    \end{align}
    yielding the claim for p = 1.
    To complete the proof we give the induction step, that is we will proof the claim for $p = p_0 + 1$ under the assumption that it is true for $p = p_0$:
    Since $U$ is permeable on qubit $p_0 + 1$, we have
    \begin{align}
        0 =& Z_{p_0 + 1} U - U Z_{p_0 + 1}
    \end{align}
    As the claim is true for $p = p_0$, we insert
    \begin{align}
        \label{small_expansion}
        U = \sum_{i = 0}^{2^{p_0} - 1} \ket{i}\bra{i} \otimes \tilde{U}_i
    \end{align}
    yielding
    \begin{align}
        0 = \sum_{i = 0}^{2^{p_0} - 1} \ket{i}\bra{i} \otimes (Z_{p_0 +1} \tilde{U}_i - Z_{p_0 +1} \tilde{U}_i )
    \end{align}
    Since each of the summand operators act on disjoint subspaces, we conclude
    \begin{align}
        0 =  Z_{p_0 +1} \tilde{U}_i - Z_{p_0 +1} \tilde{U}_i
    \end{align}
    Implying
    \begin{align}
        \tilde{U}_i = \sum_{j = 0}^{1} \ket{j}\bra{j} \otimes (\tilde{U}_i)_j.
    \end{align}
    We insert this form into into eq. \ref{small_expansion} to retrieve the claim for $p = p_0 + 1$:
    \begin{align}
        U = \sum_{i = 0}^{2^{p_0+1} - 1} \ket{i}\bra{i} \otimes \tilde{U}_i
    \end{align}
    
\begin{flushright}
$\square$\\
\end{flushright}    
\end{proof}

Having proved the above theorem, the next step is to employ it in argumentation for the validity
of Theorem \ref{commutativity_theorem}.

\begin{proof}
According to Theorem \ref{expansion_theorem} we can write
\begin{align}
    (U \otimes \mathbbm{1}^{\otimes m-p}) = \sum_{i = 0}^{2^p-1} \tilde{U}_i \otimes \ket{i}\bra{i} \otimes  \mathbbm{1}^{\otimes m-p}\\
    (\mathbbm{1}^{\otimes n-p} \otimes V) = \sum_{j = 0}^{2^p-1} \mathbbm{1}^{\otimes n-p} \otimes \ket{j}\bra{j} \otimes \tilde{V}_j
\end{align}
Multiplying these operators gives
\begin{align}
    &(U \otimes \mathbbm{1}^{\otimes m-p}) (\mathbbm{1}^{\otimes n-p} \otimes V)\\
    = &\left( \sum_{i = 0}^{2^p-1} \tilde{U}_i \otimes \ket{i}\bra{i} \otimes \mathbbm{1}^{\otimes m-p} \right)\left(\sum_{j = 0}^{2^p-1} \mathbbm{1}^{\otimes n-p} \otimes \ket{j}\bra{j} \otimes \tilde{V}_j\right)\\
    = &\sum_{i,j = 0}^{2^p-1}  \tilde{U}_i \otimes \ket{i} \langle i|j \rangle  \bra{j} \otimes \tilde{V}_j\\
    = &\sum_{i = 0}^{2^p-1}  \tilde{U}_i \otimes \ket{i} \bra{i} \otimes \tilde{V}_i
\end{align}
Multiplication in reverse order yields the same result:
\begin{align}
    &(\mathbbm{1}^{\otimes n-p} \otimes V) (U \otimes \mathbbm{1}^{\otimes m-p})\\
    = & \left(\sum_{j = 0}^{2^p-1} \mathbbm{1}^{\otimes n-p} \otimes  \ket{j}\bra{j} \otimes 
 \tilde{V}_j\right) \left(\sum_{i = 0}^{2^p-1} \tilde{U}_i \otimes  \ket{i}\bra{i} \otimes 
 \mathbbm{1}^{\otimes m-p}\right) \\
    = &\sum_{i,j = 0}^{2^p-1} \tilde{U}_i \otimes  \ket{j} \langle j|i \rangle  \bra{i} \otimes \tilde{V}_j\\
    = &\sum_{i = 0}^{2^p-1}  \tilde{U}_i \otimes  \ket{i} \bra{i} \otimes \tilde{V}_i
\end{align}
\begin{flushright}
$\square$\\
\end{flushright}
\end{proof}

\section*{Acknowledgement:} This work was funded by the Federal Ministry for Economic Affairs and Climate Action (German: Bundesministerium für Wirtschaft und Klimaschutz) under the funding number Qompiler project. The authors are responsible for the content of this publication.

\end{document}